\numberwithin{equation}{section}
\newtheorem{theorem}{Theorem}[section]
\newfont{\got}{eufm10 scaled \magstep1}
\newcommand\ii{{\rm i}}
\begin{document}
\title{Mehta's eigenvectors
for the finite Hartely transform. }

\author{Fethi Bouzeffour}

\date{}

\maketitle
\begin{center}
 Department of Mathematics, College of Sciences.\\
 King Saud University, P. O Box $2455$ Riyadh $11451$, Saudi Arabia.\\
     \textbf{ e-mail:} fbouzaffour@ksu.edu.sa.
\end{center}

\begin{abstract}
This paper presents a novel  approach for evaluating  analytical eigenfunctions of the finite Hartley transform. The approach is based on the use of $N=1/2$-supersymmetric quantum mechanics as a fundamental tool, which builds on the key observation that the Hartley transform commutes with the supercharge operator. Using the intertwining operator between the Hartley transform and the finite Hartley transform, our approach provides an overcomplete basis of eigenvectors expressed in terms of supersymmetric Hermite polynomials.\\
\noindent 2000MSC: 33D45, 42A38
\end{abstract}

\section{Introduction}
The finite Fourier transform  $\Phi^{(N)}$ maps a function $f$ on $\mathbb{Z}/N\mathbb{Z}$ to its Fourier coefficients, which are given by the formula \cite{Koo}
\begin{equation}
(\Phi^{(N)}f)(r)=\frac{1}{\sqrt{N}}\sum_{s=0}^{N-1}\exp\left(\frac{2\pi i}{N}rs\right)f(s),\qquad r\in\mathbb{Z}/N\mathbb{Z}.
\end{equation}
Equivalently, if we identify the $N$-periodic function $f$ on $\mathbb{Z}$ with the vector
\begin{equation*}
\big(f(0),\,f(1),\,\dots,f(N-1)\big)
\end{equation*}
in $\mathbb{C}^N$ , then $\Phi ^{(N)}$ is a unitary operator on $\mathbb{C}^N$ with matrix elements $\Phi_{rs}^{(N)}$ given by  $$\Phi_{rs}^{(N)}=\frac{1}{\sqrt{N}}\exp(\frac{2\pi i}{N}rs),\quad r,\,s=0,\,1,\,\dots,\,N-1.$$\indent
In his seminal paper \cite{Mehta}, Mehta investigated the eigenvalue problem of the finite Fourier transform $\Phi^{(N)}$  and found a set of eigenvectors $F_n^{(N)}$ analytically. These eigenvectors satisfy the following eigenvalue problem:
\begin{equation}
\sum_{s=0}^{N-1}\Phi_{rs}^{(N)}F_n^{(N)}(s) =
\ii^nF_n^{(N)}(r) \quad r=0,1,\ldots,N-1, \label{1.2}
\end{equation}
where $F_n^{(N)}$ are given by the expression
\begin{equation}
F_n^{(N)}(r):=\sum_{k=-\infty }^{\infty }
e^{-\frac{\pi}{N}\left( kN+r\right)^2}H_n
\Big(\sqrt{\tfrac{2\pi}{N}}(kN+r)\Big),\quad r=0,1,\ldots,N-1, \label{1.3}
\end{equation}
Here, $H_n\left( x\right)$ denotes the Hermite polynomial of degree $n$ in $x$.\\
It was conjectured by Mehta \cite{Mehta} that the $F_n^{(N)}$
($n=0,1,\ldots,N-1$
if $N$ is odd and $n=0,1,\ldots,N-2,N$ if $N$ is even) are linearly
independent,
but this problem has not been any further considered until now. As shown by
Ruzzi \cite{Ruzzi}, these systems are in general not orthogonal.
While the literature has extensively discussed eigenvectors of the finite Fourier transform, with various approaches and techniques proposed (see, for example \cite{Mehta, Matv, MesNat, AA2016}  and references therein), the analytical derivation of eigenvectors for the finite Fourier transform remains an unsolved problem.\\
In~\cite{At}, the authors conducted a detailed algebraic analysis of two finite-dimensional intertwining operators associated with the discrete Fourier transform (DFT) matrix. These operators, denoted by \( X \) and \( Y \), are represented by matrices of size \( N \times N \), and are linked via the following intertwining relations:
\begin{equation}
Y \, \Phi = \Phi \, X, \qquad X \, \Phi = -\Phi \, Y,
\end{equation}
where \( \Phi \) denotes the DFT matrix. These relations reflect a nontrivial symmetry structure and reveal an underlying algebraic framework governing the interplay between \( X \), \( Y \), and the DFT.

\indent
In this paper, we consider the finite Hartley transform $\mathcal{H}^{(N)}$, a closed discrete transform that maps a function $f$ on $\mathbb{Z}/N\mathbb{Z}$ to its Hartley coefficients. The Hartley coefficients are given by the formula
\begin{equation}
(\mathcal{H}^{(N)}f)(r)=\frac{1}{\sqrt{N}}\sum_{s=0}^{N-1}\big(\cos\left(\frac{2\pi rs}{N}\right)-\sin\left(\frac{2\pi rs}{N}\right)\big)f(s),\qquad r\in\mathbb{Z}/N\mathbb{Z}.
\end{equation}
The finite Hartley transform shares many important properties with the finite Fourier transform, including linearity, invertibility, and Parseval's identity. These transforms are widely used in various areas of mathematics, physics, and engineering, such as signal processing, data analysis, and number theory. Compared to the finite Fourier transform, the finite Hartley transform has some unique advantages, such as being real-valued and symmetric, which makes it particularly useful for certain applications, such as image processing. However, the finite Fourier transform is more commonly used and has a richer theory and wider range of applications.\\  \indent More specifically, we are concerned with the analytic solutions of following eigenvalue problem:
\begin{equation}
\sum_{s=0}^{N-1}\mathcal{H}_{rs}^{(N)}F_n^{(N)}(s) =
\lambda_nF_n^{(N)}(r), \quad r=0,1,\ldots,N-1, \label{1.2}
\end{equation}
where the matrix elements $\mathcal{H}_{rs}^{(N)}$ are given by $$\mathcal{H}_{rs}^{(N)}=\frac{1}{\sqrt{N}}\big(\cos\left(\frac{2\pi rs}{N}\right)-\sin\left(\frac{2\pi rs}{N}\right)\big).$$
The finite Hartley transform can be considered as a discrete analogue of the well-known continuous Hartley transform, which is defined by
$$(\mathscr{H}\psi)(\lambda)=\frac{1}{\sqrt{2\pi}}\int_{\mathbb{R}} \psi(x)\big(\cos(\lambda x)-\sin(\lambda x)\big)\,dx.$$
\indent In this work, we present a method for constructing families of polynomials with transformation properties similar to the integral Hartley transform. Our approach is based on a crucial observation that the Hartley transform commutes with the supercharge operator of $N=1/2$-supersymmetric quantum mechanics, which was previously studied in \cite{S3}. Readers interested in further information on supersymmetric quantum mechanics can refer to \cite{S2,S3}. Using this property, we construct eigenfunctions of the finite Hartley transform expressed in terms of supersymmetric Hermite polynomials \cite{S3}, using the same method as earlier work by Mehta \cite{Mehta}, and further developed by Dahlquist \cite{Dahl}, Matveev \cite{Matv}, and Atakishiyeva et al. \cite{Ata}.
\\
 \indent
The paper is structured as follows. In Section 2, we provide an overview of the Hartley transform and difference-differential operator, and recall general properties of supersymmetric quantum mechanics with reflection. We establish a connection between the Hartley transform and SUSY quantum mechanics, and use it to obtain overcomplete bases for Hartley transform eigenvectors.\\
In Section 3, we leverage the idea presented by Matveev \cite[Theorem 4.1]{Matv} to construct functions that exhibit favorable properties under the finite Hartley transform. Specifically, we use these functions and their Hartley images to obtain eigenfunctions of the finite Hartley transform.\\
Finally, in Section 4, we conclude the paper with a brief discussion of some further research directions of interest.

\section{Hartley transform and Dunkl-SUSY Hermite polynomials }
The Hartley transform is a linear operator defined on a suitable function $\psi(x)$ as follows:
$$(\mathscr{H}\psi)(\lambda)=\frac{1}{\sqrt{2\pi}}\int_{\mathbb{R}} \psi(x)\text{cas}(\lambda x)\,dx,$$
where the kernel of the integral, known as the cas function, is given by
\begin{equation}\text{cas}(x)=\cos(x)-\sin(x)=\sum_{n=0}^\infty \frac{(-1)^{\binom{n}{2}}}{n!}x^n,\label{cas}
\end{equation}
with $\binom{n}{2}=\frac{n(n-1)}{2}$ being the binomial coefficient.\\ The cas function \eqref{cas} can be regarded as a generalization of the exponential function $\exp$. This can be seen by considering a simple computation that shows the cas function is the unique $C^\infty$ solution to the differential-reflection problem \cite{Bouz}.
\begin{equation}
\begin{cases}
(\partial_x R)u(x) = \lambda u(x), \\
u(0) = 1.
\end{cases}
\label{probHartley}
\end{equation}
Here $\partial_x$ is the first order derivative  and $R$ is the reflection operator acting on the functions $f(x)$  as: $$(Rf)(x):=f(-x).$$
In order to find explicit eigenfunctions of the Hartley transform, we can use the technique of finding a self-adjoint operator with a simple spectrum  that commutes with the Hartley transform.\\ To do so, let us start with the first order difference-differential operator $\partial_xR$.
 By performing integration by parts, we get
\begin{align*}
\int_{\mathbb{R}} (\partial_x Rf)(x) g(x)\,dx&= -\int_{\mathbb{R}} f(x) (R\partial_x   g)(x)\,dx\\&=
 \int_{\mathbb{R}} f(x) (\partial_xR)  g(x)\,dx,
\end{align*}
where   $f,\,g\in S(\mathbb{R})$( $S(\mathbb{R})$ is the Schwartz space ).\\
Consequently, the Hartley transform satisfies the following intertwining relations:
\begin{equation}
  \mathscr{H} \, \partial_x R = \lambda \, \mathscr{H}, \quad \mathscr{H} \, x = \partial_\lambda R_\lambda \, \mathscr{H},
\label{Interr}
\end{equation}
where the differential operator \(\partial_\lambda\) and the reflection operator \(R_\lambda\) in the second equation act on the variable \(\lambda\).

We define the Dunkl-type operator \( Q \) on the Schwartz space \( S(\mathbb{R}) \) by
\begin{equation}\label{supercharge}
Q = \frac{1}{\sqrt{2}} \left( \partial_x R + x \right).
\end{equation}
Utilizing the intertwining relations given in~\eqref{Interr}, we deduce that the Hartley transform commutes with this operator, namely,
\begin{equation}
Q_\lambda \, \mathscr{H} = \mathscr{H} \, Q_x,
\end{equation}
where \( Q_x \) and \( Q_\lambda \) denote the counterparts of \( Q \) acting on the variables \( x \) and \( \lambda \), respectively.

Before going further into the physical interpretation of the operators $Q$, let us first briefly summarize the main ideas of supersymmetric quantum mechanics.
Supersymmetric quantum mechanics is a framework that extends the usual quantum mechanics to include both bosonic and fermionic degrees of freedom. In this framework, a Hamiltonian $H$ is said to be supersymmetric if there exist supercharges $Q$ and $Q^\dagger$ that satisfy the following superalgebra relations hold \cite{Karr}:
\begin{equation}
[Q,H]=0,\quad [Q^\dagger,H]=0,\quad H=\{Q,Q^\dagger\},
\label{Susy}
\end{equation}
where $[\cdot,\cdot]$ and $\{\cdot,\cdot\}$ denote the commutator and anticommutator, respectively. The supercharges $Q$ and $Q^\dagger$ are Hermitian conjugates of each other and act on the same Hilbert space as $H$.
The physical interpretation of the operators $Q$ and $Q^\dagger$ is that they correspond to transformations that mix bosonic and fermionic degrees of freedom. These transformations are useful in studying a wide range of physical systems, including high-energy physics, condensed matter physics, and statistical mechanics.\\
Note that if the supercharge \( Q \) is self-adjoint, i.e., \( Q = Q^\dagger \), then it follows from the preceding relations that
\[
H = 2 Q^2.
\]
In this case, the system is said to be \( \mathcal{N} = \tfrac{1}{2} \) supersymmetric, as the operator \( Q \) exchanges bosonic and fermionic states.

One-dimensional \( \mathcal{N} = \tfrac{1}{2} \) supersymmetric quantum mechanics can be realized by considering the supercharge \( Q \) defined in~\eqref{supercharge}. The corresponding supersymmetric Hamiltonian \( H \) takes the form
\begin{equation}
H = Q^{2} = -\frac{1}{2} \frac{d^2}{dx^2} - \frac{1}{2} x^2 - \frac{1}{2} R.
\end{equation}
Here, the potential term \( \frac{1}{2}x^2 \) originates from the harmonic oscillator, while the term \( -\frac{1}{2} R \) arises due to the reflection symmetry encoded in the superpotential. The construction of eigenfunctions associated with the Hamiltonian \( H \) and the supercharge \( Q \) has already been presented in~\cite{S3}, and a more systematic analysis is provided in~\cite{S4}. The spectrum of \( H \) can be readily obtained by expressing it as
\begin{equation}\label{ham}
H = H_0 - \frac{1}{2} R,
\end{equation}
where \( H_0 \) is the one-dimensional harmonic oscillator Hamiltonian given by
\[
H_0 = -\frac{1}{2} \frac{d^2}{dx^2} - \frac{1}{2} x^2.
\]
It is well-known  that the harmonic oscillator $H_0$ (or its self-adjoint extension) possesses a discrete spectrum, and its corresponding eigenfunctions are the  Hermite functions. The eigenfunctions, denoted by $\psi_{n}(x)$, have eigenvalues $\lambda_{n}= n  +\frac{1}{2}$ and are expressed as:\begin{equation}
    \psi_{n}(x)= \frac{1}{\sqrt{2^nn!\sqrt{\pi}}}e^{-\frac{1}{2}x^2}H_{n}(x),\quad n\in \mathbb{N}_0,
\end{equation}
where $H_n(x)$ denotes the $n$th Hermite polynomial.\\
Since $$R\psi_{n}(x)=(-1)^n\psi_{n}(x),$$  it follows from \eqref{ham} that
\begin{align}
H\psi_n(x)=(n+\frac{1-(-1)^n}{2})\psi_n(x).
\end{align}
Then the ground state wave function $\psi_0(x)$ is given by
\begin{equation*}
\psi_0(x)=\pi ^{-1/4}e^{-x^2/2}
\end{equation*}
and satisfies
\begin{equation*}
Q\psi_0=0.
\end{equation*}
Let us now carry out the gauge transformation of $Q$ with the ground state $\psi_0$.
Put \begin{equation*}\widetilde{Q}=\psi_0^{-1}Q\psi_0
.\end{equation*}
It is straightforward to see that
\begin{equation*}
\widetilde{Q}=\frac{1}{\sqrt{2}}\frac{d}{dx}R+\frac{1}{\sqrt{2}}x(1-R).
\end{equation*}
So, in this way, one reduces the problem of funding the eigenfunctions of the Hartley transform to one of solving the following  difference-differential equation
\begin{equation}\label{e1}
-u'(-x)+x\big(u(x)-u(-x)\big)=\sqrt{2}\lambda u(x).
\end{equation}
To solve this equation, it is crucial to note that any function $u$ can be expressed as the sum of an even function $u_e$ and an odd function $u_o$ as follows:\begin{align*}
u_e(x) &= \frac{1}{2} (u(x) + u(-x)), \\
u_o(x) &= \frac{1}{2} (u(x) - u(-x)).
\end{align*}
Here, $u_e$ and $u_o$ are respectively the even and odd parts of $u$.
We can rewrite the difference-differential equation \eqref{e1} as a system of two linear differential equations of first order:
\begin{equation}\label{refff}
\left\{
  \begin{array}{l l}
     u'_e=\lambda \sqrt{2} u_o \\
      u'_o-2xu_o=- \lambda \sqrt{2}u_e.\end{array} \right.
\end{equation}
We can eliminate the function $u_o(x)$ from this system \eqref{refff} and obtain a second-order differential equation for $u_e(x)$:
\begin{equation}\label{eq2}
u''_e(x)-2xu'_e(x)=-2\lambda^2u_e(x).
\end{equation}
We make the change of variable $t=x^2$ and equation \eqref{eq2} becomes:
$$tv''+(\frac{1}{2}-t)v'=-\frac{\lambda^2}{2} v.$$ Then  the general even solution of \eqref{eq2}  takes the form
\begin{align*}
u_e(x)=A \ _{1}F_1\left(\begin{matrix} \
-\frac{\lambda^2}{2}
\\\frac{1}{2}
\end{matrix};x^2\right),
\end{align*}
where  $A$ is  constant   and  $$\ _{1}F_1\left(\begin{matrix} \
a
\\ b
\end{matrix};z\right) = \sum_{n=0}^\infty\frac{(a)_n}{(b)_n}\frac{z^n}{n!}$$ is Kummer's confluent hypergeometric function \cite{AS}.
From  \eqref{refff} we obtain the solution for the function $u_o(x)$
\begin{align*}
u_o(x)=-A \sqrt{2}\lambda x \ _{1}F_1\left(\begin{matrix} \
1-\frac{\lambda^2}{2}
\\ \frac{3}{2}
\end{matrix};x^2\right).
\end{align*}
Therefore  the general solution of \eqref{e1}
\begin{align*}
u(x)=A \ _{1}F_1\left(\begin{matrix} \
-\frac{\lambda^2}{2}
\\ \frac{1}{2}
\end{matrix};x^2\right)
-A\sqrt{2}\lambda x\ _{1}F_1\left(\begin{matrix} \
1-\frac{\lambda^2}{2}
\\ \frac{3}{2}
\end{matrix};x^2\right).
\end{align*}
It is straightforward to observe that polynomial solutions are only possible if $\lambda=\pm\sqrt{2n},$ where $n=0,,1,,2,,\dots.$ In such cases, the first term in the above equation  is a polynomial of degree $2n$ and the second term is a polynomial of degree $2n-1$. We denote the eigenfunction of $Q$ corresponding to the eigenvalue $\lambda_n=\pm \sqrt{2n}$ as $\widehat{\psi}(x)$. \\ By  using the well known explicit expressions of the Hermite polynomials in terms of the confluent hypergeometric series:
\begin{align*}
H_{2n}(x)&=(-1)^{n}{\frac {(2n)!}{n!}}\,\ _{1}F_1\left(\begin{matrix} \
-n
\\ \frac{1}{2}
\end{matrix};x^2\right),\\H_{2n+1}(x)&=(-1)^{n}{\frac {(2n+1)!}{n!}}\,2x\,\ _{1}F_1\left(\begin{matrix} \
-n
\\ \frac{3}{2}
\end{matrix};x^2\right),\end{align*}we obtain the explicit expressions for these eigenfunctions given below. \begin{align}\label{p2}
\widehat{\psi}(x)=\kappa_{\pm n}^{-1} e^{-x^2/2}\mathcal{H}_{\pm n}(x), \quad n\in \mathbb{N}_0,
\end{align}
where \begin{equation}\label{hermout}
\mathcal{H}_{\pm n}(x)= H_{2n}(x)\pm 2 \sqrt{n} H_{2n-1}(x),\quad n\in \mathbb{N}_0.
\end{equation}
The normalized constants $\kappa_{\pm n}$ are also chosen so that
\begin{equation}
\int_{\mathbb{R}}\widehat{\psi}^2_{\pm n}(x)\,dx=1.
\end{equation}
A simple computation shows that
\begin{equation}
\kappa_{n}=\kappa_{-n}=\pi^{\frac{1}{4}}2^{n+1/2}(2n!)^{1/2},\quad n=0,\,1,\,2,\,\dots.
\end{equation}
The  polynomials $\mathcal{H}_{n}(x)$ with $ n\in \mathbb{Z}$ are called supersymmetric Hermite polynomials. They satisfy the orthogonality relations
\begin{equation}
\int_{\mathbb{R}}\mathcal{H}_{\pm n}(x)\mathcal{H}_{\pm m}(x)e^{-x^2}\,dx=\sqrt{\pi}2^{2n+1}(2n)
!\,\delta_{nm},\quad n,\,m\,\in \mathbb{N}_0.
\end{equation}
The system $\{\widehat{\psi}_n(x)\}_{n\in \mathbb{Z}}$ forms an orthonormal set in $L^2(\mathbb{R},dx)$ and is complete, similar to the argument used to show that the classical Hermite functions form a complete orthogonal set in $L^2(\mathbb{R},dx)$. Moreover, the operator $Q$ is essentially self-adjoint, and  we have \begin{equation}
Q\widehat{\psi}_{\pm n}(x)=\pm \sqrt{2n}\widehat{\psi}_{\pm n}(x),\,n=0,\,1,\,2,\,\dots
\end{equation}
\begin{theorem}
The eigenfunctions of the Hartley transform are given by $\{e^{-x^2/2}\mathcal{H}_{\pm n}(x)\}_{n\in \mathbb{Z}}$. Moreover, we have
\begin{equation*}
\frac{1}{\sqrt{2\pi}}\int_{-\infty}^\infty \text{cas}(xy)e^{-y^2/2}\mathcal{H}_{n}(y)\,dy=(-1)^n\, e^{-x^2/2}\mathcal{H}_{n}(x),
\end{equation*}
for all $n\in \mathbb{Z}$, where $\mathcal{H}_n(x)$ is the  supersymmetric Hermite polynomial of order $n$ given by \begin{equation*}\label{hers}
\mathcal{H}_{\pm n}(x)= H_{2n}(x)\pm 2 \sqrt{n} H_{2n-1}(x),\quad n\in \mathbb{N}_0.
\end{equation*}
\end{theorem}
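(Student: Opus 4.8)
The plan is to separate the statement into its qualitative part (these functions \emph{are} the eigenfunctions) and its quantitative part (the eigenvalue equals $(-1)^n$), and to prove the two by different means. The qualitative part I would read off directly from the machinery already assembled. Since $Q\mathscr{H}=\mathscr{H}Q$ and $Q$ is essentially self-adjoint with the complete orthonormal eigenbasis $\{\widehat{\psi}_n\}_{n\in\mathbb{Z}}$ whose spectrum $\{0\}\cup\{\pm\sqrt{2n}:n\geq 1\}$ is simple, each eigenspace of $Q$ is one-dimensional. Any (unitary) operator commuting with $Q$ leaves every eigenspace invariant, so $\mathscr{H}$ sends each $\widehat{\psi}_{\pm n}$ to a scalar multiple of itself; hence every $\widehat{\psi}_{\pm n}=\kappa_{\pm n}^{-1}e^{-x^2/2}\mathcal{H}_{\pm n}(x)$ is automatically an eigenfunction of $\mathscr{H}$, and completeness of the system shows these exhaust the eigenfunctions.

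It then remains to identify the eigenvalue, i.e.\ to establish the displayed integral identity. Here I would use the parity decomposition $\mathcal{H}_{\pm n}=H_{2n}\pm 2\sqrt{n}\,H_{2n-1}$, whose first summand is even and whose second is odd. Because $\text{cas}(xy)=\cos(xy)-\sin(xy)$ and the Gaussian weight is even, only $\cos(xy)$ survives when integrating against the even piece $e^{-y^2/2}H_{2n}(y)$, while only $-\sin(xy)$ survives against the odd piece $e^{-y^2/2}H_{2n-1}(y)$. Thus $\mathscr{H}$ acts as the cosine transform on the even part and as minus the sine transform on the odd part. I would feed these into the classical Hermite--Fourier eigenrelation
\begin{equation*}
\frac{1}{\sqrt{2\pi}}\int_{\mathbb{R}}e^{-\ii xy}e^{-y^2/2}H_m(y)\,dy=(-\ii)^m e^{-x^2/2}H_m(x),
\end{equation*}
and read off its real and imaginary parts for $m=2n$ and $m=2n-1$ respectively.

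The main point — and the step deserving genuine care — is the phase bookkeeping. For the even index one has $(-\ii)^{2n}=(-1)^n$, so taking real parts shows the cosine transform of $e^{-y^2/2}H_{2n}$ equals $(-1)^n e^{-x^2/2}H_{2n}$. For the odd index $(-\ii)^{2n-1}=(-1)^n\ii$; since the $\text{cas}$ kernel carries the real factor $-\sin(xy)$ where the Fourier kernel carries $-\ii\sin(xy)$, passing from one to the other divides out a factor $\ii$ and converts the imaginary Fourier eigenvalue $(-1)^n\ii$ into the real value $(-1)^n$. This parity-independent collapse of the two Fourier eigenvalues onto the \emph{single} real number $(-1)^n$ is exactly what allows the even and odd constituents to be combined into one eigenfunction of $\mathscr{H}$, and it is the heart of the statement; note in particular that it does not involve the sign $\pm$, so both $\mathcal{H}_{n}$ and $\mathcal{H}_{-n}$ carry the same eigenvalue.

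Summing the two contributions then yields $\mathscr{H}\big(e^{-x^2/2}\mathcal{H}_{\pm n}\big)=(-1)^n e^{-x^2/2}\mathcal{H}_{\pm n}$, and the base case $n=0$ reduces to the self-reproducing property $\mathscr{H}\big(e^{-x^2/2}\big)=e^{-x^2/2}$ of the Gaussian, consistent with $(-1)^0=1$. As an alternative to citing the Hermite--Fourier eigenrelation, I could instead propagate the eigenvalue inductively upward from the Gaussian using the intertwining relation $\mathscr{H}x=\partial_x R\,\mathscr{H}$, but the parity split produces the value in a single stroke, so I would favour it.
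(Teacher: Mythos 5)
Your proposal is correct, and it relates to the paper in two distinct ways. The qualitative half --- $\mathscr{H}$ commutes with the essentially self-adjoint $Q$, whose eigenvalues $\pm\sqrt{2n}$ are simple, so each $\widehat{\psi}_{\pm n}$ is automatically an $\mathscr{H}$-eigenfunction --- is precisely the paper's own (implicit) argument: the theorem is stated immediately after the relations $Q\widehat{\psi}_{\pm n}=\pm\sqrt{2n}\,\widehat{\psi}_{\pm n}$ and the completeness remark, with no separate proof supplied. The quantitative half is where you genuinely depart from, and improve on, the paper: the paper never derives the eigenvalue $(-1)^n$ at all, whereas you obtain it by splitting $\mathcal{H}_{\pm n}=H_{2n}\pm 2\sqrt{n}\,H_{2n-1}$ into even and odd parts, noting that $\mathscr{H}$ acts as the cosine transform on the even part and as minus the sine transform on the odd part, and reading off the classical relation
\begin{equation*}
\frac{1}{\sqrt{2\pi}}\int_{\mathbb{R}}e^{-\ii xy}e^{-y^2/2}H_m(y)\,dy=(-\ii)^m e^{-x^2/2}H_m(x)
\end{equation*}
at $m=2n$ and $m=2n-1$. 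Your phase bookkeeping is exact: $(-\ii)^{2n}=(-1)^n$, $(-\ii)^{2n-1}=(-1)^n\ii$, and trading the Fourier factor $-\ii\sin(xy)$ for the Hartley factor $-\sin(xy)$ divides out the $\ii$, so both parities carry the same real eigenvalue $(-1)^n$ --- which is indeed the crux that lets the even and odd constituents combine into a single $\mathscr{H}$-eigenfunction. Note that this computation alone already proves the entire displayed identity, so for the identity itself the SUSY/commutation argument is logically dispensable: what the paper's route buys is an explanation of why the particular combinations $H_{2n}\pm 2\sqrt{n}\,H_{2n-1}$ arise (they form the $Q$-eigenbasis) and why they should diagonalize $\mathscr{H}$; what your route buys is the eigenvalue $(-1)^n$, the one ingredient the paper asserts without justification.
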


\section{Metha's eigenfunctions for the finite Hartley transform}
Mehta's eigenfunctions for the finite Hartley transform arise similarly  as the general relationship between the integral Fourier transform and the finite Fourier transform and can be derived as an immediate consequence of a generalized Poisson summation formula for the finite Hartley transform.\\
The Hartley series of a  periodic real-valued function $f$ with period $a>0$  is given by \cite[\S 2]{Hey}
\begin{equation}
f(x)=\sum_{n\in\mathbb{Z}}\widehat{f}(n)\, cas(\frac{2\pi n}{a}x),
\end{equation}
where the coefficients $\widehat{f}(n)$ are given explicitly by  \begin{equation}
\widehat{f}(n)=\frac{1}{a}\int_{-a/2}^{a/2}f(x)\,cas(\frac{2\pi n}{a}x)\,dx.
\end{equation}
In order to proceed further, we require the following Poisson-type summation formula for the Hartley transform. To the best of our knowledge, this result does not appear explicitly in the existing literature, so we provide a full proof below.

\begin{theorem}\label{T1}
Let \( f \in S(\mathbb{R}) \). Then the following identity holds:
\begin{equation}\label{1f1}
ab \sum_{r \in \mathbb{Z}} f\big(b(ar + x)\big)
= \sqrt{2\pi} \sum_{m \in \mathbb{Z}} (\mathscr{H}f)\left( \frac{2\pi m}{ab} \right) \, \mathrm{cas}\left( \frac{2\pi m x}{a} \right).
\end{equation}
\end{theorem}

\begin{proof}
Define the function
\[
F(x) := \sum_{r \in \mathbb{Z}} f\big(b(ar + x)\big).
\]
Since \( f \in S(\mathbb{R}) \), the series converges absolutely and uniformly, so \( F(x) \) is well-defined and \( a \)-periodic. Its Hartley series coefficients are given by
\[
\widehat{F}(n) = \frac{1}{a} \int_{-a/2}^{a/2} F(x) \, \mathrm{cas}\left( \frac{2\pi n x}{a} \right) dx
= \frac{1}{a} \int_{-a/2}^{a/2} \sum_{r \in \mathbb{Z}} f\big(b(ar + x)\big) \, \mathrm{cas}\left( \frac{2\pi n x}{a} \right) dx.
\]
By uniform convergence, we may interchange the sum and the integral:
\[
\widehat{F}(n) = \frac{1}{a} \sum_{r \in \mathbb{Z}} \int_{-a/2}^{a/2} f\big(b(ar + x)\big) \, \mathrm{cas}\left( \frac{2\pi n x}{a} \right) dx.
\]

Make the change of variables \( u = ar + x \), so that \( x = u - ar \). The integration limits become \( u \in [a(r - 1/2), a(r + 1/2)] \), and we obtain:
\[
\widehat{F}(n) = \frac{1}{a} \sum_{r \in \mathbb{Z}} \int_{a(r - 1/2)}^{a(r + 1/2)} f(bu) \, \mathrm{cas}\left( \frac{2\pi n u}{a} \right) du.
\]
Thus, summing over all such intervals gives:
\[
\widehat{F}(n) = \frac{1}{a} \int_{\mathbb{R}} f(bu) \, \mathrm{cas}\left( \frac{2\pi n u}{a} \right) du.
\]

Now substitute \( v = bu \), so that \( du = \frac{dv}{b} \), yielding:
\[
\widehat{F}(n) = \frac{1}{ab} \int_{\mathbb{R}} f(v) \, \mathrm{cas}\left( \frac{2\pi n v}{ab} \right) dv
= \frac{\sqrt{2\pi}}{ab} \, (\mathscr{H}f)\left( \frac{2\pi n}{ab} \right).
\]

Since \( F(x) \) is \( a \)-periodic, it admits a Hartley series expansion:
\[
F(x) = \sum_{m \in \mathbb{Z}} \widehat{F}(m) \, \mathrm{cas}\left( \frac{2\pi m x}{a} \right).
\]

Multiplying both sides by \( ab \) gives the desired identity:
\[
ab \sum_{r \in \mathbb{Z}} f\big(b(ar + x)\big)
= \sqrt{2\pi} \sum_{m \in \mathbb{Z}} (\mathscr{H}f)\left( \frac{2\pi m}{ab} \right) \, \mathrm{cas}\left( \frac{2\pi m x}{a} \right).
\]
\end{proof}
The finite Hartley transform (FHT) of an \( N \)-periodic function \( f \) is defined by the following transform pair:
\begin{equation}\label{HartleyTr}
\left\{
  \begin{aligned}
     (\mathcal{H}^{(N)}f)(j) &:= \frac{1}{\sqrt{N}} \sum_{k=0}^{N-1} f(k)\,
     \mathrm{cas}\left( \frac{2\pi j k}{N} \right), \\[10pt]
     f(k) &:= \frac{1}{\sqrt{N}} \sum_{j=0}^{N-1} (\mathcal{H}^{(N)}f)(j)\,
     \mathrm{cas}\left( \frac{2\pi j k}{N} \right).
  \end{aligned}
\right.
\end{equation}
To derive the inverse formula in~\eqref{HartleyTr}, we use the orthogonality property of the Hartley basis:
\begin{equation*}
\sum_{j=0}^{N-1} \mathrm{cas}\left( \frac{2\pi j k}{N} \right) \mathrm{cas}\left( \frac{2\pi j k'}{N} \right) =
\begin{cases}
N, & \text{if } k = k', \\
0, & \text{if } k \neq k'.
\end{cases}
\end{equation*}
Substituting the expression for \( (\mathcal{H}^{(N)}f)(j) \) into the inverse transform, we compute:
\begin{align*}
\sum_{j=0}^{N-1} (\mathcal{H}^{(N)}f)(j) \, \mathrm{cas}\left( \frac{2\pi j k}{N} \right)
&= \sum_{j=0}^{N-1} \left( \frac{1}{\sqrt{N}} \sum_{k'=0}^{N-1} f(k') \, \mathrm{cas}\left( \frac{2\pi j k'}{N} \right) \right)
\mathrm{cas}\left( \frac{2\pi j k}{N} \right) \\
&= \frac{1}{\sqrt{N}} \sum_{k'=0}^{N-1} f(k') \sum_{j=0}^{N-1}
\mathrm{cas}\left( \frac{2\pi j k'}{N} \right) \mathrm{cas}\left( \frac{2\pi j k}{N} \right)\\&=\frac{1}{\sqrt{N}} \cdot N \cdot f(k) = \sqrt{N} \cdot f(k).
\end{align*}
Dividing both sides by \( \sqrt{N} \), we recover the inverse transform:
\[
f(k) = \frac{1}{\sqrt{N}} \sum_{j=0}^{N-1} (\mathcal{H}^{(N)}f)(j) \, \mathrm{cas}\left( \frac{2\pi j k}{N} \right),
\]
which completes the derivation of the transform pair in~\eqref{HartleyTr}.

\begin{theorem}\label{T2}
Let \( f, g \in S(\mathbb{R}) \) be related by the continuous Hartley transform:
\begin{equation*}
g(y) = \frac{1}{\sqrt{2\pi}} \int_{\mathbb{R}} \mathrm{cas}(x y)\, f(x)\, dx.
\end{equation*}
Define the discrete sequences \( F := \mathscr{M}_N f \) and \( G := \mathscr{M}_N g \), where the mapping \( \mathscr{M}_N \) from \( S(\mathbb{R}) \) to the space of \( N \)-periodic functions on \( \mathbb{Z} \) is given by
\begin{equation*}
(\mathscr{M}_{N}f)(j) = \sum_{k=-\infty}^\infty f\left( \sqrt{\frac{2\pi}{N}}(kN + j) \right), \quad j = 0, 1, \dots, N-1.
\end{equation*}
Then \( F \) and \( G \) are related via the finite Hartley transform:
\begin{equation*}
G(k) = \frac{1}{\sqrt{N}} \sum_{j=0}^{N-1} \mathrm{cas}\left( \frac{2\pi kj}{N} \right)\, F(j), \quad k = 0, 1, \dots, N-1.
\end{equation*}
\end{theorem}

\begin{proof}
We apply the generalized Poisson summation formula~\eqref{1f1} in the special case:
\[
\begin{aligned}
a &= N, \quad x = j, \quad j = 0, \dots, N-1, \\
m &= Nr + k, \quad b = \sqrt{\frac{2\pi}{N}}, \quad r \in \mathbb{Z}, \quad k = 0, \dots, N-1.
\end{aligned}
\]
Under these substitutions, identity~\eqref{1f1} becomes:
\begin{equation*}
\sum_{r \in \mathbb{Z}} f\left( \sqrt{\frac{2\pi}{N}}(Nr + j) \right)
= \frac{1}{\sqrt{N}} \sum_{k = 0}^{N - 1}
\left(
\sum_{r \in \mathbb{Z}} (\mathscr{H}f)\left( \sqrt{\frac{2\pi}{N}}(Nr + k) \right)
\right) \mathrm{cas} \left( \frac{2\pi k j}{N} \right).
\end{equation*}
This identity can be equivalently expressed as:
\[
F= \mathcal{H}^{(N)} G.
\]
The result then follows immediately from the inversion formula~\eqref{HartleyTr}.
\end{proof}
We now apply Theorem~\ref{T2} to the case where the functions \( f \) and \( g \) are given explicitly by
\[
f(x) := e^{-x^2/2} \, \mathcal{H}_n(x), \qquad
g(y) := e^{-y^2/2} \, \mathcal{H}_n(y),\quad n\in \mathbb{Z},
\]
where \( \mathcal{H}_n(x) \) denotes the  supersymmetric Hermite polynomials polynomial defined in \eqref{hermout}.

Define the sequence
\begin{equation}\label{F16}
G_n^{(N)}(r) := (\mathscr{M}_N f)(r) = \sum_{k = -\infty}^{\infty} e^{-\frac{\pi}{N}(kN + r)^2}
\, \mathcal{H}_n\left( \sqrt{\frac{2\pi}{N}}(kN + r) \right),\quad r\in \mathbb{Z}.
\end{equation}

\begin{theorem}
For each \( n \in \mathbb{N}_0 \), the sequence \( G_n^{(N)} \) is an eigenfunction of the finite Hartley transform associated with the eigenvalue \( (-1)^n \). That is,
\begin{equation}\label{1e}
\mathcal{H}^{(N)} G_n^{(N)} = (-1)^n \, G_n^{(N)}.
\end{equation}
\end{theorem}
\section{Concluding Comments and Outlook}

In this work, we have established a connection between supersymmetric Hermite polynomials and the Hartley transform. Building on foundational contributions by Mehta~\cite{Mehta}, Dahlquist~\cite{Dahl}, Matveev~\cite{Matv}, and Atakishiyeva et al.~\cite{Ata,Koo}, we adopted a similar approach to construct eigenfunctions of the finite Hartley transform. In particular, the authors of~\cite{Koo} employed a related technique to construct $q$-extended eigenvectors of the finite Fourier transform, which were subsequently used to derive eigenvectors of the finite Hartley transform.

A compelling direction for future research involves investigating families of \( q \)-polynomials from the Askey scheme of basic hypergeometric polynomials~\cite{AS}, with a focus on identifying those that exhibit simple transformation properties under the finite Hartley transform. Should such properties be discovered, they could lead to novel \( q \)-analogues of Mehta’s eigenfunctions~\eqref{F16} for the finite Hartley transform.

Another promising avenue concerns the symmetry properties of finite-dimensional intertwining operators associated with the discrete Hartley transform matrix, as explored in~\cite{Mehta,Matv,MesNat,AA2016}. Furthermore, the construction of a discrete analogue of \( \mathcal{N} = \tfrac{1}{2} \) supersymmetric quantum mechanics presents an intriguing research challenge.

We intend to pursue both of these directions in future work, with particular emphasis on deriving an explicit finite-difference equation satisfied by the eigenvectors~\eqref{F16} of the finite Hartley transform operator \( \mathcal{H}^{(N)} \), in the framework of supersymmetric quantum mechanics. From an analytical standpoint, this entails constructing a finite-difference operator \( L \) that commutes with \( \mathcal{H}^{(N)} \), thereby enabling a spectral decomposition of the eigenspaces of \( \mathcal{H}^{(N)} \) through the eigenstructure of \( L \).

\subsection*{Acknowledgements}
This research is supported by the Ongoing Research Funding Program (ORF-2025-974), King Saud University, Riyadh, Saudi Arabia.

\medskip

\end{document}